\documentclass{article}
\usepackage{arxiv}
\usepackage[english]{babel}
\usepackage[utf8]{inputenc} 
\usepackage[T1]{fontenc}    
\usepackage{hyperref}       
\usepackage{url}            
\usepackage{booktabs}       
\usepackage{amsfonts}       
\usepackage{nicefrac}       
\usepackage{microtype}      
\usepackage{graphicx}
\usepackage[sort, numbers]{natbib}
\usepackage{doi}
\usepackage{amsmath}
\usepackage{amsthm}
\usepackage[english]{babel}
\usepackage[dvipsnames]{xcolor}
\newtheorem{theorem}{Theorem}[section]
\newtheorem{remark}[theorem]{Remark}
\newcommand{\hl}[1] {\textcolor{black}{#1}}

\newcommand{\p}{\partial}
\newcommand{\T}{\theta}

\newcommand{\n}{\hat{n}}

\title{An Entropic Approach to Classical Density Functional Theory}
\author{ \href{https://orcid.org/0000-0000-0000-0000}{\includegraphics[scale=0.06]{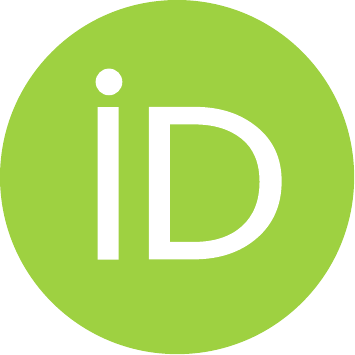}\hspace{1mm}Ahmad Yousefi} \\
	Department of Physics\\
	University at Albany-SUNY\\
	Albany, NY 12222 \\
	\texttt{ayousefi@albany.edu} \\
	\texttt{ahmad369@gmail.com} \\
	\And
	\href{https://orcid.org/0000-0000-0000-0000}{\includegraphics[scale=0.06]{orcid.pdf}\hspace{1mm}Ariel Caticha} \\
	Department of Physics\\
	University at Albany-SUNY\\
	Albany, NY 12222 \\
	\texttt{acaticha@albany.edu} \\
}



\hypersetup{
pdftitle={A template for the arxiv style},
pdfsubject={q-bio.NC, q-bio.QM},
pdfauthor={David S.~Hippocampus, Elias D.~Striatum},
pdfkeywords={First keyword, Second keyword, More},
}
\numberwithin{equation}{section}
\begin{document}
\maketitle
\begin{abstract}
	The classical Density Functional Theory (DFT) is introduced as an application of entropic inference for inhomogeneous fluids in thermal equilibrium.  It is shown that entropic inference reproduces the variational principle of DFT when information about expected density of particles is imposed. This process introduces a family of trial density-parametrized probability distributions, and consequently a trial entropy, from which the preferred one is found using the method of Maximum Entropy (MaxEnt).  As an application, the DFT model for slowly varying density is provided, and its approximation scheme is discussed.
\end{abstract}
\keywords{entropic inference; relative entropy; density functional theory; contact geometry; \hl{optimal approximations}}
\section{Introduction}
\hl{The} Density Functional Theory was first developed in the context of quantum mechanics and only later extended to the classical regime. The theory was first introduced by Kohn and Hohenberg (1964)\cite{kohn} as a computational tool to calculate the spatial density of an electron gas in the presence of an external potential at zero temperature. Soon afterwards, Mermin provided the extension to finite temperatures \cite{mermin}. Ebner, Saam, and Stroud (1976)\cite{ebner} applied the idea to simple classical fluids, and Evans (1979) provided a systematic formulation in his classic paper \cite{evans79} "The nature of the liquid-vapour interface and other topics in the statistical mechanics of non-uniform, classical fluids.".\\
\hl{
The majority of physicists and chemists today are aware of the quantum DFT and the Kohn-Sham model \cite{kohnsham}, while fewer are familiar with the classical DFT, a historical review of quantum DFT and its vast variety of applications is found in \cite{halfcentury,rise}. The classical DFT, similarly, is a "formalism designed to tackle the statistical mechanics of inhomogeneous fluids"\cite{newdev}, which has been used to investigate a wide variety of equilibrium phenomena, including surface tension, adsorption, wetting, fluids in porous materials, and the chemical physics of solvation.   
}
\\
\hl{Just like the Thomas-Fermi-Dirac theory is usually regarded as a precursor of quantum DFT, the van der Waals' thermodynamic theory of capillarity under the hypothesis of a continuous variation of density \cite{van} can be regarded as the earliest work on classical DFT, without a fundamental proof of existence for such a variational principle.   
}
\\
\hl{
"The long-term legacy of DFT depends largely on the
continued value of the DFT computer programs that
practitioners use daily."\cite{halfcentury} The algorithms behind the computer programs, all starting from an original Hartree-Fock method to solve the N-particle Schr\"odinger equation, have evolved by many approximations and extensions implemented over time by a series of individuals, although the algorithms produce accurate results, they do not mention the HK variational principle. Without the variational principle the computer codes are suspect of being ad hoc or intuitively motivated without a solid theoretical foundation, therefore, the DFT variational principle not only scientifically justifies the DFT algorithms, but also provides us with a basis to understand the repeatedly modified algorithms behind the codes. 
}\\
In this work we derive the classical DFT as an application of the method of maximum entropy \cite{Jaynes1,Jaynes2,wdws,psm,entropicphysics}. \hl{This integrates the classical DFT with other formalisms of classical statistical mechanics (canonical, grand canonical, etc.) as an application of information theory. Our approach not only enables one to understand the theory from the Bayesian point of view, but also provides a framework to construct equilibrium theories on the foundation of MaxEnt. We emphasize that our goal is not derive an alternative to DFT. Our goal is purely conceptual. We wish to find a new justification or derivation of DFT that makes it explicit how DFT fits within the MaxEnt approach to statistical mechanics. The advantage of such an understanding is the potential for future applications that are outside the reach of the current version of DFT.}
\\
In section \ref{cinference}, we briefly review entropic inference as an inference tool which updates probabilities as degrees of rational belief in response to new information. Then we show that any entropy maximization, produces a contact structure which is invariant under \hl{the} Legendre Transformations; this enables us to take advantage of these transformations for maximized entropy functions (functionals here) found from constraints other than those of thermal equilibrium as well as thermodynamic potentials.\\
In section \ref{preferred} we briefly review the method of relative entropy for optimal approximation of probabilities, \hl{which allows us to derive and then generalize the Bogolyubov variational principle.} Then we apply it for the especial case wherein the trial family of probabilities are parametrized by the density function $n(x)$.  
\\
In section \ref{cdensityformalism}, the Density Functional formalism is introduced as an extension of the existing ensemble formalisms of statistical mechanics (canonical, grand canonical, etc.) and we show that the core \hl{DFT theorem} is an immediate consequence of MaxEnt: we prove that in the presence of \hl{an} external potential $v(x)$, there exists \hl{a trial} density functional entropy $S_v(E; n]$ maximized at the equilibrium density. We also prove that this entropy maximization is equivalent to minimization of a density functional potential $\Omega(\beta;n]$ given by
\begin{equation}
    \Omega_v(\beta;n]=\int d^3x v(x)n(x)+F(\beta;n]
\end{equation}
where $F(\beta;n]$ is independent of $v(x)$. 
This formulation achieves two objectives. \textbf{i)} It shows that \hl{the} density functional variational principle is an application of \hl{MaxEnt} for non-uniform fluids at equilibrium and therefore, varying \hl{the} density $n(x)$ in $S_v(E;n]$ \textbf{does not} imply that the functional represents entropy of any non-equilibrium system, this \hl{trial} entropy, although very useful, is just a mathematical construct which allows us to incorporate constraints which are related to one another by definition. \textbf{ii)} By this approach we show that \hl{the} Bayesian interpretation of probability liberates \hl{the fundamental theorem of the} DFT from \hl{an} imaginary grand-canonical ensemble, i.e. \hl{the} thermodynamic chemical potential is appropriately defined without need to define microstates for varying number of particles.\\
\hl{Finally, in section \ref{gradient}, as an illustration we discuss the already well-known example of a slowly varying inhomogeneous fluid.} We show that our entropic DFT allows us \hl{to} reproduce the gradient approximation results derived by Evans \cite{evans79}. There are two different approximations involved, \textbf{i)} \hl{rewriting} \hl{the} non-uniform direct correlation function \hl{in terms of} \hl{the} uniform one; and \textbf{ii)} the use of linear response theory to evaluate Fourier transform of direct correlation functions. The former assumes that \hl{the} density is uniform inside each volume element, and the latter assumes that difference of densities for neighboring volume elements is small compared to their average. 
\section{Entropic inference}\label{cinference}
A discussion of the method of maximum entropy as a tool for inference is found in \cite{entropicphysics}. Given a \textbf{prior} Probability Distribution Function (\textbf{PDF}) $Q(X)$, we want to find the \textbf{posterior} PDF $P(X)$ subject to constraints on expected values of functions of $X$.\\
Formally, we need to maximize \hl{the} relative entropy
\begin{equation}
    S_r[P|Q]\equiv-\sum_X (PlogP-PlogQ) \; ,
\end{equation}
under constraints $A_i=\sum_X P(X) \hat A_i(X)$ and $1=\sum_X P(\hl{X})$;
where $A_i$'s are real numbers, and $\hat A_i$'s are real-valued functions on \hl{the} space of $X$, and $1 \leq i \leq m$ for $m$ number of constraints.\\
The maximization process yields \hl{the} posterior probability 
\begin{equation}
    P(X)=Q(X)\frac{1}{Z}e^{-\sum_i\alpha_i \hat A_i(X)} \; , \quad \quad \text{where }  
    Z=\sum_X Q(X)e^{-\sum_i\alpha_i \hat A_i(X)}\;,
\end{equation}
and $\alpha_i$'s are Lagrange multipliers associated with $A_i$'s.\\
Consequently the maximized entropy is
\begin{equation}\label{ent101}
    S=\sum_i\alpha_i A_i+log Z\; .
\end{equation}
Now we can show that the above entropy readily produces a contact structure, we can calculate \hl{the} complete differential of equation (\ref{ent101}) to define \hl{the} vanishing one-form $\omega_{cl}$ as
\begin{equation}
    \omega_{cl}\equiv dS-\sum_i \alpha_i dA_i=0 \; .
\end{equation}
Therefore any classical entropy maximization with $m$ constraints produces a contact structure $\lbrace  \hl{\mathbb T},\omega_{cl}\rbrace$ in which manifold $\mathbb T$ has $2m+1$ coordinates $\lbrace q_0,q_1,\dots q_m,p_1,\dots,p_m\rbrace$.\\
The physically relevant manifold $\mathbb M$ is an $m$-dimensional sub-manifold of $\mathbb T$, on which $\omega_{cl}$ vanishes; i.e. $\mathbb M$ is determined by $1+m$ equations
\begin{equation}
    q_0\equiv S(\lbrace A_i\rbrace)\; , \quad \quad p_i\equiv\alpha_i=\frac{\p S}{\p A_i}\; .
\end{equation}
Legendre Transformations defined as, 
\begin{equation}
    q_0\longrightarrow  \; q_0 -\sum_{j=1}^{l}p_jq_j\; ,
\end{equation}
\begin{equation}
    \nonumber
    q_i\longrightarrow  \; p_i \; ,\quad p_i\longrightarrow \; -q_i \; ,  \quad\quad \text{for } 1\leq i\leq l \; ,
\end{equation}
are coordinate transformations on space $\mathbb T$ under which $\omega_{cl}$ is conserved. It has been shown \cite{rajeev,HSTh} that the laws of thermodynamics produce a contact structure conforming to \hl{the} above prescription, here we are emphasizing that the contact structure is an immediate consequence of \hl{MaxEnt}, and therefore it can be utilized in applications of information theory, beyond thermodynamics.
\section{Maxent and Optimal Approximation of Probabilities}\label{preferred}
The posterior PDF found from entropic inference, is usually too complicated to be used for practical purposes. A common solution is to approximate the posterior PDF by \hl{a} more tractable family of PDFs $\lbrace p_\T\rbrace$ \cite{catichatseng}. Given the exact probability $p_0$, the preferred member of tractable family $p_{\T^*}$ is found by \hl{maximizing the entropy of $p_{\T}$ relative to $p_0$:}
\begin{equation}\label{optimal}
    \frac{\delta S_r[p_\T|p_0]}{\delta \T}\Big|_{\T=\T^*}=0 \;.
\end{equation}
\hl{The} density functional formalism is \hl{a} systematic method in which the family of trial probabilities is parametrized by \hl{the} density of particles; in section \ref{cdensityformalism} we shall use the method of maximum entropy to determine the family of trial distributions parametrized by $n(x)$, $p_\T\equiv p_{n}$. So that we can rewrite equation \ref{optimal} as
\hl{
\begin{equation}\label{optimaln}
    \frac{\delta}{\delta n(x')}
    \Bigg[
       S_r[p_{n}|p_0]+\alpha_{eq}[N-\int d^3x n(x)]
    \Bigg]_{n(x)=n_{eq}(x)}=0 \;.
\end{equation}
}
We will see that the canonical distribution itself is a member of the trial family, therefore in this case, the exact solution to equation (\ref{optimaln}) is $p_0$ itself\hl{:}  
\begin{equation}
    p_{n}\Big|_{n(x)=n_{eq}(x)}=p_0\; .
\end{equation}
\section{Density Functional Formalism}\label{cdensityformalism}
An equilibrium formalism of statistical mechanics is a relative entropy maximization process consisting of three crucial \hl{elements}: i) \hl{One must choose the microstates that describe the system of inference.} ii) \hl{The prior is chosen to be uniform.} iii) \hl{One must select the constraints that represent the information that is relevant to the problem in hand}.\\
In the density ensemble, microstates of the system are given as positions and momenta of all $N$ particles of the same kind, given \hl{the} uniform prior probability distribution
\begin{equation}
   Q(\lbrace \vec x_1,\dots,\vec x_N;\vec p_1,\dots,\vec p_N\rbrace)=constant.
\end{equation}
Having in mind that we are looking for \textbf{thermal properties of inhomogeneous fluids}, it is natural to choose the density of particles $n(x)$ as computational constraint, and the expected energy $E$ as thermodynamic constraint, in which $n(x)$ represents the \textbf{inhomogeneity} and $E$ defines \hl{the} thermal equilibrium.\\
Note that all constraints (computational, thermal, etc.) in the framework can be incorporated as inferential constraints and can be imposed as prescribed in section \ref{cinference}.\\   
The density \hl{constraint} holds for every point in space, therefore we have $1+1+\mathbb R^3$ constraints, one for normalization, one for total energy, and one for  density of particles at each point in space; so we have to maximize \hl{the} relative entropy
\begin{equation}\label{sr}
   S_r[P|Q]=-\frac{1}{N!}\int d^{3N}xd^{3N}p (PlogP-PlogQ)
   \equiv -Tr_c (PlogP-PlogQ),
\end{equation}
subject to constraints
\begin{subequations}\label{gheyds}
\begin{align}
    1=&\langle 1 \rangle, \quad
    E=\langle \hat H_v\rangle, \\    
    n(x)=&\langle \hat n_x\rangle \quad \text{where} \int d^3x n(x)=N,
\end{align}
\end{subequations}
where $\langle . \rangle\equiv \frac{1}{N!}\int (.)Pd^{3N}xd^{3N}p$. \hl{The} classical Hamiltonian operator $\hat H$ and the particle density operator $\hat n_x$ are given as
\begin{equation}
    \hat H_v\equiv\sum_{i=1}^N
    v(x_i)
    +\hat K(p_1,\dots,p_N)+\hat U(x_1,\dots,x_N),
\end{equation}
\begin{equation}\label{densityop}
 \hat n_x\hl{\equiv}\sum_i^N \delta(x-x_i).   
\end{equation}
The density $n(x)$ is not an arbitrary function; it is constrained by a fixed total number of particles,
\begin{equation}\label{secondary}
    \int d^3x n(x)=N.
\end{equation}
Maximizing (\ref{sr}) subject to (\ref{gheyds}) gives the posterior probability $P(x_1,\dots,x_N;p_1,\dots,p_N)$ as
\begin{equation}\label{postprob}
    P=\frac{1}{Z_v}e^{-\beta \hat H_v-\int d^3x \alpha(x)\hat n_x}\quad\text{under condition}  \int d^3xn(x)=N.
\end{equation}
where $\alpha(x)$ and $\beta$ are \hl{Lagrange} multipliers.\\
\hl{The Lagrange multiplier function $\alpha(x)$ is implicitely determined by
\begin{equation}
    \frac{\delta logZ_v}{\delta \alpha(x)}=-n(x)
\end{equation}
and by equation (\ref{secondary})
\begin{equation}
    -\int d^3x \frac{\delta logZ_v}{\delta \alpha(x)}=N \; .
\end{equation}
}
Substituting the trial probabilities from (\ref{postprob}) into (\ref{sr}) gives the \hl{\textbf{trial}} entropy $S_v(E\hl{;}n]$ as
\begin{equation}\label{entopen}
    S_v(E;n]=\beta E+\int d^3x \alpha(x) n(x)+log Z_v,
\end{equation}
where $Z_v(\beta;\alpha]$ is the \hl{\textbf{trial}} partition function defined as
\begin{equation}
    Z_v(\beta;\alpha]=Tr_c e^{-\beta \hat H_v-\int d^3x \alpha(x)\hat n_x} \; .
\end{equation}
The equilibrium density \hl{$n_{eq}(x)$} is that which maximizes $S_v(E\hl{;}n]$ subject to $\int d^3xn(x)=N$:
\begin{equation}\label{f2cb}
    \frac{\delta}{\delta n(x')}\Bigg[
       S_v(E;n]+\alpha_{eq}[N-\int d^3xn(x)]
    \Bigg]=0\quad \textit{for fixed E}.
\end{equation}
Next, perform a Legendre transformation and define \hl{the} Massieu functional $\tilde S_v(\beta,n]$ as
\begin{equation}\label{massieu}
    \tilde S_v\equiv S_{\hl{v}}-\beta E,
\end{equation}
so that we can rewrite  equation (\ref{f2cb}) as
\begin{equation}\label{massvar}
    \frac{\delta}{\delta n(x')}\Bigg[
       \tilde S_v(\beta;n]-\alpha_{eq}\int d^3xn(x)
    \Bigg]=0 \quad \textit{for fixed $\beta$} \; .
\end{equation}
\hl{Combine (\ref{entopen}), (\ref{massieu}), and (\ref{massvar}), and use the variational derivative identity $\frac{\delta n(x)}{\delta n(x')}=\delta(x-x')$ to find}
\begin{equation}\label{gold00}
    \int d^3x \Bigg[
       \frac{\delta logZ_v(\beta;\alpha]}{\delta \alpha(x)}+n(x)
    \Bigg]\frac{\delta\alpha(x)}{\delta n(x')}
    =\alpha_{eq}-\alpha(x').
\end{equation}
\hl{The LHS of equation (\ref{gold00}) vanishes by (\ref{secondary}) and therefore the RHS must vanish for an arbitrary $n(x)$ which implies that}
\begin{equation}\label{alpha}
       \alpha(x)=\alpha_{eq}\; , \quad \text{and} \quad \frac{\delta logZ_v}{\delta \alpha(x)}\hl{\Big|_{\alpha_{eq}}}=-n_{\hl{eq}}(x) \; .
\end{equation}
Substituting (\ref{alpha}) into (\ref{postprob}) yields the equilibrium probability distribution
\begin{equation}\label{finalpost}
   P^*(x_1,\dots,x_N;p_1,\dots,p_N)=\frac{1}{Z_v^*}e^{-\beta\hat H_v-\alpha_{eq}\int d^3x\hat n_x}
   =
   \frac{1}{Z_v^*}e^{-\beta\hat H_v-\alpha_{eq}N}
\end{equation}
where $Z_v^*(\beta,\alpha_{eq})=Tr_c e^{-\beta \hat H_v-\alpha_{eq}N}.\label{zstar}$\\
From the inferential point of view, \textbf{the variational principle for the grand potential and the equilibrium density}\cite{evans79} is proved at this point; we showed that for an arbitrary classical Hamiltonian $\hat H_v$, there exists \hl{a trial} entropy $S_v(E;n(x)]$ defined by equation (\ref{entopen}), which assumes its maximum at fixed energy and varying $n(x)$ under \hl{the} condition $\int d^3x n(x)=N$ at \hl{the} equilibrium density, and gives \hl{the} posterior PDF (equation \ref{finalpost}) equal to that of \hl{the} canonical distribution.\\   
The massieu function $\tilde S_v(\beta;n(x)]$ from equation (\ref{massieu}) defines \hl{the} \textbf{density functional potential} $\Omega_v(\beta;n(x)]$ by
\begin{equation}\label{Omega}
    \Omega_v(\beta;n]\equiv\frac{-\tilde S_v(\beta;n]}{\beta}
    =-\int d^3x\frac{\alpha(x)}{\beta}n(x)-\frac{1}{\beta}logZ_v(\beta;\alpha] \; , 
\end{equation}
so that the maximization of $S_v(E;n]$ (\ref{entopen}) in the vicinity of equilibrium, is equivalent to the minimization of $\Omega_v(\beta;n(x)]$ (\ref{Omega}) around the same equilibrium point
\begin{equation}\label{asli}
    \frac{\delta}{\delta n(x')}
    \Bigg[ \Omega_v(\beta;n]+\frac{\alpha_{eq}}{\beta}\int d^3xn(x)
    \Bigg]=0 \; .
\end{equation}
After we find $\Omega_v$, we just need to recall that $\alpha(x)=-\beta\frac{\delta \Omega_v}{\delta n(x)}$ and substitute in equation (\ref{alpha}) to \hl{recover the} "core integro-differential equation"\cite{evans79} of DFT as 
\begin{equation}\label{core}
    \nabla\Big(
    \frac{\delta \Omega(\beta;n]}{\delta n(x)}
    \Big)_{eq}=0 
\end{equation}
which implies that
\begin{equation}\label{omvar}
    \Omega_{v;eq} \leq \Omega_v(\beta;n],
\end{equation}
 where
\begin{equation} \label{ome}
    \Omega_{v,eq}(\beta;n]=-\frac{\alpha_{eq}}{\beta}\int d^3x n(x)-\frac{1}{\beta}log Z_v^*(\beta,\alpha_{eq}).
\end{equation}
From equation (\ref{Omega}) it is clear that
\begin{equation}\label{om}
    \Omega_v(\beta;n]=\int d^3x v(x)n(x)+\langle \hat K+\hat U\rangle-\frac{S_v(E;n]}{\beta} \; .
\end{equation}
It is convenient to define \hl{the} \textbf{intrinsic density functional potential} $F_v$ as
\begin{equation}\label{intrins}
    F_v(\beta;n]\equiv \langle \hat K+\hat U\rangle-\frac{S_v}{\beta}
\end{equation}
to have
\begin{equation}\label{om1}
    \Omega_v(\beta;n]=\int v(x)n(x)+F_v(\beta;n]\; .
\end{equation}
\hl{Now we are ready to restate the fundamental theorem of the classical DFT:}
\begin{theorem}
The intrinsic functional potential $F_v$ is a functional of density $n(x)$ and is independent of the external potential:
   \begin{equation}
       \frac{\delta F_v(\beta;n]}{\delta v(x')}=0 \quad \text{for fixed $\beta$ and $n(x)$.}
   \end{equation}
\end{theorem}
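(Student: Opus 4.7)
The plan is to show that the $v$-dependence in $F_v$ cancels completely, by rewriting the trial partition function so that $v(x)$ enters only through a shifted Lagrange multiplier that is itself fixed once $n(x)$ and $\beta$ are specified.

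First I would combine the definitions (\ref{intrins}), (\ref{entopen}) and (\ref{om}) to obtain the explicit expression
\begin{equation*}
    \beta F_v(\beta;n] = -\int d^3x\,[\alpha(x)+\beta v(x)]\,n(x) - \log Z_v(\beta;\alpha] \; .
\end{equation*}
Next I would exploit the structure of the exponent appearing in (\ref{postprob}). Since $\hat H_v = \sum_i v(x_i) + \hat K + \hat U = \int d^3x\, v(x)\hat n_x + \hat K + \hat U$, we have
\begin{equation*}
    \beta \hat H_v + \int d^3x\,\alpha(x)\hat n_x = \beta(\hat K+\hat U) + \int d^3x\,[\alpha(x)+\beta v(x)]\,\hat n_x \; ,
\end{equation*}
so if I introduce the shifted multiplier $\tilde\alpha(x)\equiv \alpha(x)+\beta v(x)$, the trial partition function becomes
\begin{equation*}
    Z_v(\beta;\alpha] = \mathrm{Tr}_c\, e^{-\beta(\hat K+\hat U)-\int d^3x\,\tilde\alpha(x)\hat n_x} \equiv \tilde Z(\beta;\tilde\alpha] \; ,
\end{equation*}
in which $v(x)$ has disappeared entirely. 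Substituting back gives $\beta F_v = -\int d^3x\,\tilde\alpha(x)n(x)-\log \tilde Z(\beta;\tilde\alpha]$, manifestly depending on $v$ only through the combination $\tilde\alpha$.

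The remaining step, which is where the main subtlety lies, is to argue that once $n(x)$ and $\beta$ are fixed, $\tilde\alpha(x)$ is also fixed and cannot drift with $v$. This follows from the implicit relation $n(x)=-\delta \log Z_v/\delta\alpha(x)=-\delta\log\tilde Z/\delta\tilde\alpha(x)$, inherited from (\ref{alpha}) and the chain rule $\delta/\delta\alpha(x)=\delta/\delta\tilde\alpha(x)$ at fixed $v$. Strict convexity of $\log\tilde Z$ in $\tilde\alpha$ (the free-energy/partition-function Hessian being positive definite, a standard property of the log of a sum of positive exponentials) makes the map $\tilde\alpha \mapsto n$ injective, so holding $n$ and $\beta$ fixed pins down a unique $\tilde\alpha$ irrespective of how $v$ and $\alpha$ individually are chosen. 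Consequently
\begin{equation*}
    \frac{\delta F_v(\beta;n]}{\delta v(x')}\bigg|_{\beta,n} = 0 \; .
\end{equation*}

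The delicate part is this last uniqueness argument: one must resist the temptation to differentiate naively in $v$ while treating $\alpha$ as independent of $v$, since $\alpha(x)$ is defined implicitly through the constraint equation. The clean resolution is to change variables from $(\alpha,v)$ to $(\tilde\alpha,v)$ and note that $n$ is a function of $\tilde\alpha$ alone, so that the level sets of $n$ at fixed $\beta$ are precisely the level sets of $\tilde\alpha$. All other steps are algebraic bookkeeping with (\ref{intrins})–(\ref{om1}).
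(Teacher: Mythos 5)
Your proposal is correct and follows essentially the same route as the paper's own proof: you introduce the shifted multiplier $\tilde\alpha(x)=\alpha(x)+\beta v(x)$ (the paper's $\bar\alpha$), observe that $Z_v$ and hence $F_v$ depend on $v$ only through this combination, and conclude that fixing $n(x)$ and $\beta$ fixes $\tilde\alpha$ so the derivative with respect to $v$ vanishes. Your only addition is the explicit injectivity argument via strict convexity of $\log\tilde Z$, which sharpens the paper's brief assertion that ``keeping $n(x)$ fixed is achieved by keeping $\bar\alpha(x)$ fixed'' but does not change the structure of the argument.
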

\begin{proof}
    The crucial observation behind the DFT formalism is that $P$ and $Z_v$ depend on the external potential $v(x)$ and the Lagrange multipliers $\alpha(x)$ only through the particular combination $\bar \alpha(x)\equiv \beta v(x)+\alpha(x)$. \hl{Substitute} equation (\ref{entopen}) in (\ref{intrins}) to get
    \begin{equation}\label{intrins1}
        \beta F_v(\beta;n]=logZ(\beta;\bar \alpha]+\int d^3x \bar \alpha(x)n(x),
    \end{equation}
    where $Z(\beta;\bar \alpha]=Z_v(\beta;\alpha]=Tr_c e^{-\beta(\hat K+\hat U)-\int d^3x\bar \alpha(x)\n_x} \; .$
    The functional derivative of $\beta F_v$ at fixed $n(x)$ and $\beta$ is
    \begin{equation}
        \frac{\delta (\beta F_v(\beta;n])}{\delta v(x')}\Big|_{\beta,n(x)}
        =
        \int d^3x''\frac{\delta}{\delta\bar\alpha (x'')} 
        \Big[
           log Z(\beta;\bar\alpha]+\int d^3x \bar\alpha(x)n(x) 
        \Big]
        \frac{\delta \bar\alpha(x'')}{\delta v(x')}\Big|_{\beta,n(x)} \; .
    \end{equation}
    Since $n(x')=-\frac{\delta logZ(\beta;\bar\alpha]}{\delta \bar\alpha(x')}$, keeping $n(x)$ fixed is achieved by keeping $\bar\alpha(x)$ fixed:
    \begin{equation}
        \frac{\delta \bar\alpha(x'')}{\delta v(x')}\Big|_{\beta,n(x)}=\frac{\delta \bar\alpha(x'')}{\delta v(x')}\Big|_{\beta,\bar\alpha(x)}=0 \; ,
    \end{equation}
    so that
    \begin{equation}
        \frac{\delta F_v(\beta,n]}{\delta v(x')}
        \Big|_{\beta,n(x)}
        =0\; ,
    \end{equation}
    which concludes the proof; thus, we can write down the intrinsic potential as
    \begin{equation}
        F(\beta,n(x)]=F_v(\beta,n(x)]\; .
    \end{equation}
\end{proof}
\begin{remark}
    Note that \hl{since a} change of the external potential $v(x)$ can be compensated by a suitable change of the multiplier $\alpha(x)$ in such a way as to keep $\bar\alpha(x)$ fixed, such changes in $v(x)$ will have no effect on $n(x)$. Therefore keeping $n(x)$ fixed on the left hand side of (\ref{intrins1}) means that $\bar\alpha(x)$ on the right side is fixed too.
\end{remark}
Now we can substitute equation (\ref{om1}) in (\ref{asli}), and define \hl{the} chemical potential 
\begin{equation}
    \mu\equiv\frac{-\alpha_{eq}}{\beta} \; ,
\end{equation}
to have
\begin{equation}\label{aslitar}
    \frac{\delta}{\delta n(x')}
    \Bigg[ \int d^3x v(x)n(x)+F(\beta;n]-\mu\int d^3xn(x)
    \Bigg]_{n(x)=n_{eq}(x)}=0 \; .
\end{equation}
We can also substitute (\ref{om1}) in (\ref{core}) to find
\begin{equation}\label{potential}
    v(x)+\frac{\delta F}{\delta n(x)}\Big|_{eq}=\mu,
\end{equation}
which allows us to \hl{define and} \textbf{interpret} $\mu_{in}(x;n]\equiv \frac{\delta F}{\delta n(x)}$ as \hl{the} \textbf{intrinsic chemical potential} of the system. 
To proceed further we also split $F$ into that of ideal gas plus \hl{the} interaction part as
\begin{equation}
    F(\beta;n]=F_{id}(\beta;n]-\phi(\beta;n].
\end{equation}
Differentiating with $\frac{\delta}{\delta n(x)}$ gives
\begin{equation}
    \beta\mu_{in}(x;n]=log(\lambda^3n(x))-c(x;n] \; ,
\end{equation}
where for \hl{a} monatomic gas $\lambda=\Big(\frac{2\pi\hbar^2}{m}\Big)^{1/2}$. \hl{The} additional one body potential $c(x;n]=\frac{\delta \phi}{\delta n(x)}$ is related to \hl{the} Ornstein-Zernike direct correlation function of non-uniform fluid \cite{18,19,22} by
\begin{equation}
    c^{(2)}(x,x';n]\equiv\frac{\delta c(x;n]}{\delta n(x')}=\frac{\delta^2\phi(\beta;n]}{\delta n(x) \delta n(x')} \; .
\end{equation}
\section {Slowly Varying Density and Gradient Expansion}\label{gradient}
We have proved that the solution to equation (\ref{aslitar}) is the equilibrium density. But the functional $F(\beta;n]$ needs to be approximated because \hl{the} direct calculation of $F$ involves \hl{calculating} the canonical partition function, the task which we have been avoiding to begin with. Therefore different models of DFT may vary in their approach for guessing $F(\beta;n]$.
Now assume that we are interested in a monatomic fluid with a slowly varying external potential. In our language, it means that we use \hl{the} approximation $\int d^3x \equiv \sum (\Delta x)^3$ where $\Delta x$ is much longer than the density correlation length and the change in density in each volume element is small compared to average density.
\hl{This allows us to interpret each volume element $(\Delta x)^3$ as a fluid at grand canonical equilibrium with the rest of the fluid as its thermal and particle bath.}\\ 
Similar to \cite{evans79}, we expand $F(\beta;n]$ as
\begin{equation}\label{e311}
    F(\beta;n]=\int d^3x \Big[f_0(n(x))+f_2(n(x))|\nabla n(x)|^2+\mathcal O(\nabla^4 n(x))\Big] \; .
\end{equation}
Differentiating with respect to $n(x)$ we have
\begin{equation}\label{muin}
    \mu_{in}(x;n]=\frac{\delta F}{\delta n(x)}
    =f'_0(n(x))-f'_2(n(x))|\nabla n(x)|^2-2f_2(n(x))\nabla^2n(x).
\end{equation}
In \hl{the} absence of \hl{an} external potential, $v(x)=0$, the second and the third terms in \hl{the} RHS of (\ref{muin}) vanish, and also \hl{from equation (\ref{potential})}, \hl{$\mu_{in}=\mu$} , therefore we have
\begin{equation}\label{f0}
    f_0'(n)=\mu(n(x)),
\end{equation}
where $\mu(n(x))$ is chemical potential of \hl{a} uniform fluid with density $n=n(x)$.
\hl{On} the other hand\hl{,} with the assumption that each volume element behaves as if it is in grand canonical ensemble for itself under influence of both external potential and additional one body interaction $c(x;n]$ we know that \hl{the} second derivative of $F$ is related to Ornstein-Zernike theory by
\begin{equation}
    \beta\frac{\delta^2 F}{\delta n(x)\delta n(x')}=\frac{\delta (x-x')}{n(x)}-c^{(2)}(x,x';n] \; .
\end{equation}
Therefore we have a Taylor expansion of $F$ around \hl{the} uniform density as
\begin{align}\label{taylor}
    F[n(x)]=&F[n]+\int d^3x \Big[\frac{\delta F}{\delta n(x)}\Big]_{n_{eq}(x)}\tilde n(x)\\ \nonumber
    &+\frac{1}{2\beta}\int\int d^3x d^3x'\Big[ \frac{\delta (x-x')}{n(x)}-c^{(2)}(|x-x'|;n]\Big]_{n_{eq}(x)}\tilde n(x)\tilde n(x')+\dots,
\end{align}
where $\tilde n(x)\equiv n(x)-n$, and $c^{(2)}(|x-x'|;n]$ is \hl{the} direct correlation function of \hl{a} uniform fluid with density $n=n(x)$.
\hl{The} Fourier transform of the second integral in (\ref{taylor}) gives
\begin{align}
    \frac{1}{2\beta}\int\int d^3x d^3x'\Big[ \frac{\delta (x-x')}{n(x)}-c^{(2)}(|x-x'|;n]\Big]_{n_{eq}(x)}\tilde n(x)\tilde n(x')\\ \nonumber
    =\frac{-1}{2\beta V} \sum_q
    \Big(
    c^{(2)}(q;n]-\frac{1}{n(q)}
    \Big)
    \tilde n(q)\tilde n(-q) \; ,
\end{align}
and comparing with (\ref{e311}) yields
\begin{equation}
    f_0''(n)=\frac{-1}{\beta}(a(n)-\frac{1}{n})\; ,
    \quad \quad
    f_2(n)=\frac{-b(n)}{2\beta},
\end{equation}
where the functions $a(n)$ and $b(n)$ are defined as coefficients of Fourier transform of the Ornstein-Zernike direct correlation function by $c^2(q;n]=a(n(q))+b(n(q))q^2+\dots$.\\
$b(n)$ is evaluated by linear response theory to find that
\begin{equation}\label{lrt}
    f_2(n(x))=\frac{1}{12\beta}\int d^3x' |x-x'|^2 c^{(2)}(|x-x'|;n].
\end{equation}
We can substitute equations (\ref{lrt}) and (\ref{f0}) in (\ref{muin}) and use equilibrium identity $\nabla \mu=0$ to find  the integro-differential equation
\begin{equation}
    \nabla
    \Bigg[
    v(x)+\mu(n(x))-f_2'(n(x))|\nabla n(x)|^2-2f_2(n(x))\nabla^2 n(x)
    \Bigg]_{n(x)=n_{eq}(x)}=0,
\end{equation}
that determines the equilibrium density $\n_{eq}(x)$ in the presence of external potential $v(x)$ given Ornstein-Zernike direct correlation function of uniform fluid $c^{(2)}[n(x),|x-x'|]$.
\section{Conclusion}
We have shown that the variational principle of classical DFT is a special case of applying the method of maximum entropy to construct optimal approximations in terms of those variables that capture the relevant physical information namely, the particle density $n(x)$.
\hl{It is worth emphasizing once again: In this paper we have pursued the purely conceptual goal of finding how DFT fits within the MaxEnt approach to statistical mechanics. The advantage of achieving such an insight is the potential for future applications that lie outside the reach of the current versions of DFT. As an illustration we have discussed the already well-known example of a slowly varying inhomogeneous fluid.}
Future research can be pursued in three different directions: \textbf{i)} To show that the method of maximum entropy can also be used to derive the quantum version of DFT. \textbf{ii)} To approach \hl{the} Dynamic DFT \cite{marconi1}, generalizing the idea to non-equilibrium systems following the theory of maximum caliber \cite{caliber}. \textbf{iii)} To revisit the objective of section \ref{gradient} and construct weighted DFTs \cite{tarazona,rosenfeld} using the method of maximum entropy.



\begin{thebibliography}{999}
\bibitem{kohn} Kohn, W., Hohenberg, P., Inhomogeneous electron gas. \textit{Phys. Rev.}, \textbf{1964}, \textit{136, B864}.
\bibitem{mermin} Mermin, D.,Thermal properties of inhomogeneous electron gas. \textit{Phys. Rev.} \textbf{1965}, \textit{137, A1441}.
\bibitem{ebner} Ebner, C., Saam, W.F., Stroud, D., Density-functional theory of simple classical fluids. I. surfaces. \textit{Phys. Rev. A}, \textbf{1976} \textit{14, 2264}.
\bibitem{evans79} Evans, R.,The nature of the liquid-vapor interface and other topic in statistical mechanics of non uniform, classical fluids. \textit{Advances in Physics}, \textbf{1979}, \textit{28-2, 143-200}.

\bibitem{kohnsham} Kohn, W., Sham, L.J., Self-consistent equations including exchange and correlation effects. \textit{Phys. Rev.}, \textbf{1965}, \textit{140, A1133}.


\bibitem{halfcentury} Zangwill, A., A half century of density functional theory. \textit{Physics Today}, \textbf{2015}, \textit{68, 7}.


\bibitem{rise} Jones, R.O., Density functional theory: Its origins, rise to prominence, and future. \textit{Rev. Mod. Phys.}, \textbf{2015}, \textit{87, 897}.

\bibitem{newdev} Evans, R., Oettel, M., Roth, R., Kahl, G., New developments in classical density functional theory. \textit{J. Phys.: Condens. Matter}, \textbf{2016}, \textit{28, 240401}.

\bibitem{van} van der Waals, J.D.,\textit{Z. Phys. Chem.}, \textbf{1894}, \textit{13, 657}.\\ 
Rowlinson, J.S., The thermodynamik theory of capillarity under the hypothesis of a continuous variation of density. 1979 \textit{J. Stat. Phys.}, \textbf{1979}, \textit{20, 197}. (translation)
\bibitem{Jaynes1} Jaynes, E.T. Information theory and statistical mechanics. \textit{Phys. Rev.}, \textbf{1957}, \textit{106, 620}. 
\bibitem{Jaynes2} Jaynes, E.T. Information theory and statistical mechanics II. \textit{Phys. Rev.} \textbf{1957}, \textit{108, 171}.
\bibitem{wdws} Jaynes, E.T., Where do we stand on maximum entropy?. \textit{The Maximum Entropy Principle ed. by R. D. Levine and M. Tribus (MITPress);} \textbf{1979}; and at \url{http://bayes.wustl.edu}.
\bibitem{psm} Jaynes, E.T. Predictive statistical mechanics. \textit{in Frontiers of Nonequilibrium Statistical Physics, G.T. Moore and M.O. Scully (eds.)(Plenum Press, New York)}, \textbf{1986}; and at \url{http://bayes.wustl.edu}.
\bibitem{entropicphysics} Caticha, A., Entropic physics, probability, entropy, and The foundations of physics.  \textit{Available online} \textbf{accessed on 12 April 2021}: \url{https://www.albany.edu/physics/faculty/ariel-caticha}.
\bibitem{rajeev} Rajeev, S.G., A Hamilton-Jacobi formalism for thermodynamics. \textit{Annals of Physics}, \textbf{323-9, 2265-2285}; and at \url{https://arxiv.org/pdf/0711.4319.pdf}.
\bibitem{HSTh} Balian, R., Valentin, P., Hamiltonian structure of thermodynamics with gauge. \textit{Eur. Phys. J. B} \textbf{2001} \textit{21, 269–282}; and at \url{https://arxiv.org/pdf/cond-mat/0007292.pdf}.
\bibitem{catichatseng} Caticha, A.,Tseng, C., Using relative entropy to find optimal approximations: an application to simple fluids. \textit{Physica A} \textbf{2008}, \textit{387, 6759}; and at \url{http://arxiv.org/abs/0808.4160v1}.
\bibitem{18} Toxvaerd, S., Perturbation theory for nonuniform fluids: surface tension, \textit{J. Chem. Phys.}, \textbf{1971} 55, 3116.
\bibitem{19} Singh, Y., and Abraham, F. F., Structure and thermodynamics of the liquid–vapor interface. \textit{J. Chem. Phys.}, \textbf{1977} \textit{67,537}.
\bibitem{22} Hansen, J-P., McDonald, I.R., Theory of simple liquids. \textit{Academic Press}, \textbf{1976}.
\bibitem{marconi1} Marconi U.M.B., Tarazona, P.,  Dynamic density functional theory of fluids. \textit{J. Chem. Phys.} \textbf{1999}, \textit{110, 8032}.
\bibitem{caliber} Jaynes, E.T., The minimum entropy production principle. \textit{Ann. Rev. Phys. Chem.}, \textbf{1980} \textit{31,579-601}. 
\bibitem{tarazona}Tarazona, P., Free-energy density functional for hard spheres. \textit{Phys. Rev. A}, \textbf{1985}, \textit{31,2672}.
\bibitem{rosenfeld} Rosenfeld, Y., Free-energy model for the inhomogeneous hard-sphere fluid mixture and density-functional theory of freezing. \textit{Phys. Rev. Lett.}, \textbf{1989}, \textit{63,980}.
\end{thebibliography}





\end{document}